\documentclass[11pt]{article}
\usepackage{hyperref}
\usepackage{epsfig}
\usepackage{fullpage}
\usepackage{latexsym}
\usepackage{amssymb,amsfonts,amsmath,amsthm}

\def\01{\{0,1\}}

\newcommand{\eps}{\varepsilon}
\renewcommand{\epsilon}{\eps}

\newcommand{\ket}[1]{|#1\rangle}

\newcommand{\ketbra}[2]{|#1\rangle\langle#2|}

\newcommand{\inpc}[2]{\langle{#1},{#2}\rangle} 
\newcommand{\Tr}{\mbox{\rm Tr}}

\newcommand{\R}{\mathbb{R}}

\newcommand{\dist}{{\rm dist}}
\newcommand{\supp}{{\rm supp}}
\newcommand{\calA}{{\cal A}}
\newcommand{\calP}{{\cal P}}

\newtheorem{theorem}{Theorem}
\newtheorem{lemma}[theorem]{Lemma}

\newtheorem{observation}[theorem]{Observation}


\renewcommand{\qed}{\hfill{\rule{2mm}{2mm}}}
\renewenvironment{proof}[1][]{\begin{trivlist}
\item[\hspace{\labelsep}{\bf\noindent Proof#1:\/}] }{\qed\end{trivlist}}

\begin{document}

\title{Upper Bounds on the Noise Threshold for Fault-tolerant Quantum Computing}
\author{Julia Kempe\thanks{School of Computer Science, Tel-Aviv University, Tel-Aviv 69978, Israel. Supported by
the European Commission under the Integrated Project Qubit Applications (QAP) funded by the IST directorate
as Contract Number 015848, by an Alon Fellowship of the Israeli Higher Council of Academic Research and by an
Individual Research Grant of the Israeli Science Foundation.}
\and
Oded Regev\thanks{School of Computer Science, Tel-Aviv University, Tel-Aviv 69978, Israel. Supported  by the Binational Science Foundation, by the Israel Science Foundation, and  by the European Commission under the Integrated Project QAP funded by the IST directorate as Contract Number 015848.}
 \and Falk Unger\thanks{Partially supported by the European Commission under the Integrated Project Qubit Applications (QAP) funded by the IST directorate as Contract Number 015848.} \and Ronald de Wolf\thanks{Partially supported by a Veni grant from the Netherlands Organization for Scientific Research (NWO), and by the European Commission under the Integrated Project Qubit Applications (QAP) funded by the IST directorate as Contract Number 015848.}}
\maketitle

\begin{abstract}
We prove new upper bounds on the tolerable level of noise in a quantum circuit.
We consider circuits consisting of unitary $k$-qubit gates each of whose input wires is
subject to depolarizing noise of strength $p$, as well as arbitrary one-qubit gates that
are essentially noise-free. We assume that the output of the circuit is the result of
measuring some designated qubit in the final state.
Our main result is that for $p>1-\Theta(1/\sqrt{k})$, the output of any such circuit of large enough
depth is essentially independent of its input, thereby making the circuit useless.
For the important special case of $k=2$, our bound is $p>35.7\%$.
Moreover, if the only allowed gate on more than one qubit is the two-qubit CNOT gate, then our bound becomes $29.3\%$.
These bounds on $p$ are notably better than previous bounds, yet are incomparable because
of the somewhat different circuit model that we are using.
Our main technique is the use of a Pauli basis decomposition, which we believe
should lead to further progress in deriving such bounds.
\end{abstract}

\section{Introduction}\label{secintro}

The field of quantum computing faces two main tasks:
to build a large-scale quantum computer, and to figure out what it can do once it exists.
In general the first task is best left to (experimental) physicists and engineers,
but there is one crucial aspect where theorists play an important role, and that is in
analyzing the level of noise that a quantum computer can tolerate before breaking down.

The physical systems in which qubits may be implemented are typically tiny and fragile
(electrons, photons and the like). This raises the following paradox: On the one hand we
want to isolate these systems from their environment as much as possible, in order to avoid the
noise caused by unwanted interaction with the environment---so-called ``decoherence''. But on the other hand
we need to manipulate these qubits very precisely in order to carry out computational operations.
A certain level of noise and errors from the environment is therefore unavoidable in any implementation,
and in order to be able to compute one would have to use techniques of error correction and fault tolerance.

Unfortunately, the techniques that are used in classical error correction and fault tolerance
do not work directly in the quantum case. Moreover, extending these techniques
to the quantum world seems at first sight to be nearly impossible due to the continuum of possible
quantum states and error patterns.
Indeed, when the first important quantum algorithms were discovered~\cite{bernstein&vazirani:qcomplexity,simon:power,shor:factoring,grover:search}, many
dismissed the whole model of quantum computing as a pipe dream, because it was
expected that decoherence would quickly destroy the necessary quantum properties of
superposition and entanglement.

It thus came as a great surprise when, in the mid-1990s, \emph{quantum error correcting codes} were developed
by Shor and Steane~\cite{shor:scheme,steane:errcor}, and these ideas later led to the development of schemes
for \emph{fault-tolerant quantum computing}~\cite{shor:faulttol,klz:treshold,KnillLZ98,
aharonov&benor:faulttol,kitaev:qcsurvey,gottesman:thesis}. Such schemes take any quantum algorithm designed
for an ideal noiseless quantum computer, and turn it into an implementation that is robust against noise, as
long as the amount of noise is below a certain threshold, known as the \emph{fault-tolerant threshold}. The
overhead introduced by the fault-tolerant schemes is typically quite modest (a polylogarithmic factor in the
total running time of the algorithm).

The existence of fault-tolerant schemes turns the problem
of building a quantum computer into a hard but possible-in-principle engineering problem:
if we just manage to store our qubits and operate upon them with a level of noise below
the fault-tolerant threshold, then we can perform arbitrarily long quantum computations.
The actual {\em value} of the fault-tolerant threshold is far from determined, but will have
a crucial influence on the future of the area---the more noise a quantum computer
can tolerate in theory, the more likely it is to be realized in practice.\footnote{The ``fault-tolerant
threshold" is actually not a universal constant, but rather depends on the details of the
circuit model (allowed set of gates, type of noise, etc.). A more precise discussion
will be given later.}

The first fault-tolerant schemes were only able to tolerate noise on the order of $10^{-6}$,
which is way below the level of accuracy that experimentalists can hope
to achieve in the foreseeable future.
These initial schemes have been substantially improved in the past decade.
In particular, Knill has recently developed various schemes which, according to numerical calculations, seem to be
able to tolerate more than 1\%\ noise~\cite{knill:realisticallynoisy,knill:thresholdanalysis}.
If we insist on provable constructions, the best known threshold is on the order
of $0.1\%$~\cite{agp:threshold,aliferis:fibonacci,aliferis:thesis,reichardt:thesis}.

Constructions of fault-tolerant schemes provide a \emph{lower bound} on the fault-tolerant threshold.
A very interesting question, which is the topic of the current paper, is whether one
can prove \emph{upper bounds} on the fault-tolerant threshold. Such bounds give
an indication on how far away we are from finding optimal fault-tolerant schemes.
They can also give hints as to how one should go about constructing improved
fault-tolerant schemes. Such upper bounds are statements of the form ``any quantum
computation performed with noise level higher than $p$ is essentially useless",
where ``essentially useless" is usually some strong indication that interesting
quantum computations are impossible in such a model. For instance, Buhrman et al.~\cite{bcllsu:faulttol}
quantify this by giving a classical simulation of such noisy quantum computation,
and Razborov~\cite{razborov:qnoise} shows that if the computation is too long, the
output of the circuit is essentially independent of its input.

The best known upper bounds on the threshold are $50\%$ by Razborov~\cite{razborov:qnoise} and $45.3\%$ by Buhrman et al.~\cite{bcllsu:faulttol}.
(These bounds are incomparable because they work in different models; See the end of this section for more accurate statements.)
As one can see, there are still about two orders of magnitude between our best upper and lower bounds
on the fault-tolerant threshold. This leaves experimentalists in the dark
as to the level of accuracy they should try to achieve in their experiments.
In this paper, we somewhat reduce this gap.  So far, much more work has
been spent on lower bounds than on upper bounds.  Our approach will be the less-trodden
road from above, hoping to bring new techniques to bear on this problem.

\paragraph{Our model.}
In order to state our results, we need to describe our circuit model.
We consider parallel circuits, composed of $n$ \emph{wires} and $T$ \emph{levels} of gates (see Figure \ref{Fig:circuit}).
We sometimes use the term \emph{time} to refer to one of the $T+1$ ``vertical cuts'' between the levels.
For convenience, we assume that the number of qubits $n$ does not change during the computation.
Each level is described by a partition of the qubits, as well as a gate assigned
to each set in the partition. Notice that at each level, all qubits must go through
some gate (possibly the identity).
Notice also that for each gate the number of input qubits is the same as the number of
output qubits.

\begin{figure}[h]
\center{\epsfxsize=4in \epsfbox{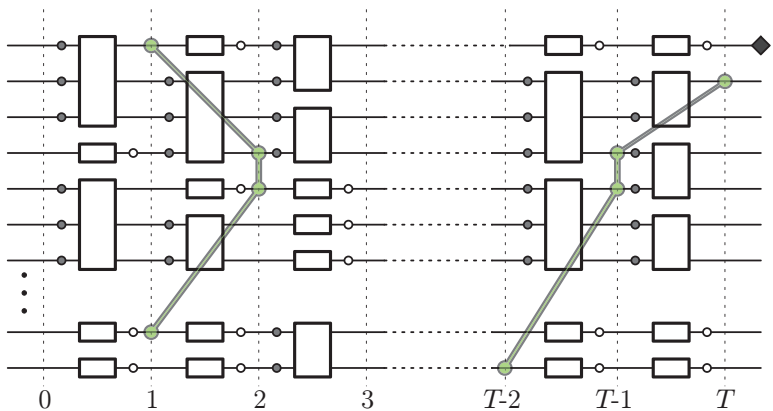}} \caption{Parallel circuit with $k=3$ and $T$ levels.
Dark circles denote $\eps_k$-depolarizing noise,
and light circles denote $\eps_1$-depolarizing noise. Also marked are two consistent sets (defined in Section~\ref{secmain}), each
containing four qubits. The first has distance $1$, the second has distance $T-2$.
The output qubit is in the upper right corner.}
\label{Fig:circuit}
\end{figure}

We assume the circuit is composed of $k$-qubit gates that are probabilistic mixtures of unitary operations,
as well as arbitrary (i.e., all completely-positive trace-preserving) one-qubit gates.
We assume the output of the circuit is the outcome of a measurement of a designated output qubit in the computational basis.
Finally, we assume that the circuit is subject to noise as follows.
Recall that $p$-depolarizing noise on a certain qubit replaces that qubit by the completely mixed
state with probability $p$, and does not alter the qubit otherwise.
Formally, this is described by the superoperator $\cal E$ acting on a qubit $\rho$ as
$
{\cal E}(\rho)=(1-p)\rho+p{I}/2.
$
We assume that each one-qubit gate is followed by at least $\eps_1$-depolarizing noise on its output qubit,
where $\eps_1>0$ is an arbitrarily small constant. Thus one-qubit gates can be essentially noise-free.
We also assume that each $k$-qubit gate is preceded by at least $\eps_k$-depolarizing noise on
each of its input qubits, where $\eps_k > 1-\sqrt{2^{1/k}-1} = 1 - \Theta(1/\sqrt{k})$.

\paragraph{Our results.}

In Section~\ref{secmain} we prove our main result:

\begin{theorem}\label{the:main}
Fix any $T$-level quantum circuit as above.
Then for any two states $\rho$ and $\tau$, the probabilities of obtaining measurement outcome $1$ at the output qubit
starting from $\rho$ and starting from $\tau$, respectively, differ by at most $2^{-\Omega(T)}$.
\end{theorem}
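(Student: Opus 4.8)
The plan is to track how noise shrinks the "non-trivial" part of the global state in a Pauli basis. Write the $n$-qubit density matrix at any time as $\rho = 2^{-n}\sum_{s \in \{I,X,Y,Z\}^n} \hat\rho(s)\, \sigma_s$, where $\sigma_s$ is the tensor product of the indicated Pauli matrices. The key elementary fact is that $p$-depolarizing noise on a single qubit multiplies every Pauli coefficient $\hat\rho(s)$ with a non-identity Pauli on that qubit by a factor $(1-p)$, and leaves coefficients with $I$ on that qubit untouched. Unitary gates (and their probabilistic mixtures) only permute/mix Pauli coefficients of the \emph{same} weight-support structure on their own qubits and do not increase the $\ell_2$-mass; arbitrary one-qubit CPTP maps likewise cannot increase the $\ell_2$-norm of the vector of Pauli coefficients restricted to that qubit. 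The first main step is therefore to set up this Pauli-decomposition bookkeeping carefully and record these three monotonicity/contraction facts.

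Next I would define, for each time $t$ (one of the $T+1$ cuts) and each qubit, a notion of how much "live" non-identity Pauli weight sits on that qubit and can still reach the output qubit by time $T$. The natural object is to look backwards from the output: a term $\sigma_s$ can influence the final measurement of the designated qubit only if its support is connected to the output through the light-cone of the circuit. The crucial quantitative point is the constant $\eps_k > 1-\sqrt{2^{1/k}-1}$: a $k$-qubit gate has $4^k-1$ non-identity Pauli coefficients on its block, and each of its $k$ input wires first suffers $\eps_k$-depolarization; the choice of $\eps_k$ is exactly what makes $(1-\eps_k)^2 \cdot (\text{something like } 2^k) < 1$, i.e. it forces a strict per-level contraction of the relevant $\ell_2$ (or $\ell_1$) mass even after accounting for the branching of one term into up to $\sim 4^k$ children through a gate. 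So the second step is to prove a one-level lemma: the total weight of live non-identity terms (measured in the right norm, summed appropriately over qubits) multiplies by a constant $c<1$ each time we peel off one level of gates, using $\eps_k$ on the $k$-qubit-gate wires and $\eps_1>0$ on the rest.

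Iterating this one-level contraction $T$ times gives that the live non-identity part of the state, as seen from the output qubit, has total weight at most $c^{T} \cdot (\text{poly in the initial }\ell_2\text{ norm}) = 2^{-\Omega(T)}$. The final step is to convert this into the statement about measurement probabilities: the probability of outcome $1$ on the output qubit is $\half(1 + \hat\rho(I\cdots I Z I\cdots I))$ for the appropriate single-$Z$ term, so the difference between the two runs is $\half|\hat\rho(\cdot) - \hat\tau(\cdot)|$; this single coefficient is dominated by the live non-identity weight just bounded, and the $I\cdots I$ coefficient is always $1$ and cancels. Hence the two acceptance probabilities differ by $2^{-\Omega(T)}$.

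The main obstacle I anticipate is the one-level contraction lemma, and specifically getting the norm and the combinatorics of the light-cone to cooperate. One has to choose the weighting of qubits/terms so that (i) mixtures of unitaries and arbitrary one-qubit maps are non-expansive in it, (ii) passing a term through a $k$-qubit gate, which can turn a single live Pauli on one input wire into a superposition over many Paulis on all $k$ output wires, is controlled by the $(1-\eps_k)$ factors picked up on the input wires, and (iii) the bound is uniform over all gate choices and all partitions. Balancing these — essentially an $\ell_2$ energy argument where each gate contributes a factor $(1-\eps_k)^2$ per wire against an expansion factor governed by the dimension $2^k$ — is where the precise threshold $\eps_k > 1 - \sqrt{2^{1/k}-1}$ must be used, and getting the constants exactly right (to recover $35.7\%$ for $k=2$, and the better CNOT bound) is the delicate part.
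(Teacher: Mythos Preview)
Your high-level framework is the same as the paper's---Pauli decomposition, $\ell_2$ tracking, and the observation that $\eps_k$-depolarization contributes a $(1-\eps_k)^2$ factor per supported wire---but there is a genuine gap in your treatment of one-qubit CPTP maps. The claim that ``arbitrary one-qubit CPTP maps likewise cannot increase the $\ell_2$-norm of the vector of Pauli coefficients restricted to that qubit'' is false: the reset-to-$\ket{0}$ gate sends the completely mixed state (purity $1/2$) to a pure state (purity $1$), and more to the point it moves mass from the $I$-coefficient into the $Z$-coefficient. In the multi-qubit setting this means a coefficient $\widehat\delta(I\otimes S)$ can be pumped into $\widehat\delta(Z\otimes S)$, so the ``live non-identity weight'' you want to contract can actually \emph{grow} at a one-qubit gate. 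Tracking $\rho$ and $\tau$ separately, as you propose, makes this worse: the non-identity mass of $\rho$ alone need not decay at all (imagine resetting every qubit to $\ket{0}$ every level). The paper avoids this by working from the start with $\delta=\rho-\tau$, whose all-identity coefficient is always zero, but even then the $I$-on-one-qubit coefficients $\widehat\delta_V(I\otimes S)$ are generally nonzero and can feed non-identity mass through a one-qubit CPTP map.

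The paper's fix is twofold. First, instead of a global per-level contraction, it proves by induction over \emph{consistent subsets} $V$ the invariant $\Tr(\delta_V^2)\le 2\,\theta^{\dist(V)}$, equivalently $\sum_S \widehat{\delta_V}(S)^2\le 2\cdot 2^{|V|}\cdot\theta^{\dist(V)}$; the factor $2^{|V|}$ is the budget that absorbs the branching through a $k$-qubit gate and is what makes the threshold come out to $(1+\mu^2)^k\le 2$. Second, for one-qubit gates it invokes the Ruskai--Szarek--Werner structure theorem to show that any one-qubit CPTP map satisfies
\[
\widehat{\delta'}(X)^2+\widehat{\delta'}(Y)^2+\widehat{\delta'}(Z)^2 \le (1-\beta)\,\widehat\delta(I)^2 + \beta\,(\widehat\delta(X)^2+\widehat\delta(Y)^2+\widehat\delta(Z)^2)
\]
for some $\beta\in[0,1]$; the induction then applies the hypothesis to \emph{both} $V$ and $V\setminus\{A\}$ and combines them. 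Your ``one-level lemma'' with a single norm contracting by a fixed $c<1$ does not accommodate this transfer from identity to non-identity; the obstacle you anticipated is real, but it is structural, not just a matter of tuning constants. (Minor side note: unitaries do not preserve ``weight-support structure''---CNOT sends $X\otimes I$ to $X\otimes X$---they only preserve the total sum of squares, which is all the argument uses.)
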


\noindent
In other words, for any $\eta>0$, the probability of measuring $1$ at the output qubit of a circuit running for
$T=O(\log(1/\eta))$ levels is (up to $\pm\eta$) independent of the input.
This makes the output essentially independent of the starting state, and renders long computations
``essentially useless''.

Of special interest from an experimental point of view is the case $k=2$, for which our bound becomes about
$35.7\%$. Furthermore, for the case in which the only allowed two-qubit gate is the CNOT gate, we can improve our
bound further to about $29.3\%$, as we show in Section~\ref{seccnot}. This case is interesting both
theoretically and experimentally. Note also that the CNOT gate together with all one-qubit gates forms
a universal set~\cite{barencoea:gates}.

\paragraph{Significance of results.}
Here we comment on the significance of our results and of our model.

First, it is known that fault-tolerant quantum computation is impossible (for any positive noise level)
without a source of fresh qubits. Our model takes care of this by allowing arbitrary one-qubit gates---in
particular, this includes gates that take any input, and output a fixed one-qubit state, for instance the
classical state $\ket{0}$. This justifies our assumption that the number of qubits in the circuit remains the
same throughout the computation: all qubits can be present from the start, since we can reset them to
whatever we want whenever needed.

Second, our assumption that all $k$-qubit gates are mixtures of unitaries does slightly restrict generality.
Not every completely-positive trace-preserving map can be written as a mixture of unitaries.
However, we believe that it is a reasonable assumption. As one indication of this, to the best of our knowledge, all known
fault-tolerant constructions can be implemented using such gates (in addition to
arbitrary one-qubit gates). Moreover, all known quantum algorithms gain their speed-up over classical
algorithms by using only unitary gates.

A slightly more severe restriction is the assumption
that the output consists of just one qubit. However, we believe that in many instances
this is still a reasonable assumption. For instance, this is the case whenever
the circuit is required to solve a decision problem.
Moreover, our results can be easily extended to deal with
the case in which a small number of qubits are used as an output.

By allowing essentially noise-free one-qubit gates, our model addresses the fact that gates
on more than one qubit are generally much harder to implement.
It should also be noted that the exact value of the constant $\epsilon_1$ is inessential and can be
chosen arbitrarily small, as this just affects the constant in the $\Omega(\cdot)$ of Theorem~\ref{the:main}.
In fact, $\epsilon_1>0$ is only necessary because otherwise it would be possible to let
$\rho:=\ketbra{0}{0}\otimes \rho'$ and $\tau:=\ketbra{1}{1}\otimes
\tau'$, do nothing for $T$ levels (i.e., apply noise-free one-qubit identity gates on all wires)
and then measure the first qubit. The resulting difference between output probabilities is then $1$.
Instead of assuming an $\eps_1>0$ amount of noise, we could alternatively deal with this
issue by requiring that every path from the input to the output qubit goes
through enough $k$-qubit gates. Our proof can be easily adapted to this case.

Note that since our theorem applies to arbitrary starting states, it in particular
applies to the case that the initial state is encoded in some good quantum
error-correcting code, or that it is some sort of ``magic state"~\cite{BK:Magic,R:Magic}.
In all these cases, our theorem shows that the computation becomes essentially
independent of the input after sufficiently many levels.

Finally, it is interesting to note that our bound on the threshold
behaves like $1-\Theta(1/\sqrt{k})$. This matches what is known for classical circuits~\cite{EvansS99,EvansS03},
and therefore probably represents the correct asymptotic behavior. Previous bounds
only achieved an asymptotic behavior of $1-\Theta(1/k)$~\cite{razborov:qnoise}.

\paragraph{Techniques.}
We believe that a main part of our contribution is introducing a new technique
for obtaining upper bounds on the fault-tolerant threshold. Namely,
we use a Pauli basis decomposition in order to track the state of the computation.
We believe this framework will be useful also for further analysis
of quantum fault-tolerance. A finer analysis of the Pauli coefficients might
improve the bounds we achieve here, and possibly obtain bounds that are tailored
to other computational models.

\paragraph{Related work.}
The work most closely related to ours is that of Razborov~\cite{razborov:qnoise}. There, he proves an upper
bound of $\eps_k=1-1/k$ on the fault-tolerant threshold. On one hand, his result is stronger than ours as it
allows arbitrary $k$-qubit gates and not just mixtures of unitaries. Razborov also has a second result,
namely the trace distance between the two states obtained by applying the circuit to starting states $\rho$
and $\tau$, respectively, goes down as $n2^{-\Omega(T)}$ with the number of levels $T$. Hence even the
results of an arbitrary $n$-qubit measurement on the full final state become essentially independent of the
initial state after $T=O(\log n)$ levels. On the other hand, the value of our bound is better for all values
of $k$, and we also allow essentially noise-free one-qubit gates. Hence the two results are incomparable.
Razborov's proof is based on tracking how the trace distance evolves during the computation. Our proof is
similar in flavor, but instead of working with the trace distance, we work with the Frobenius distance (since
it can be easily expressed in terms of the Pauli decomposition).

Buhrman et al.~\cite{bcllsu:faulttol} show that classical circuits can efficiently simulate
any quantum circuit that consists of perfect, noise-free \emph{stabilizer operations}
(meaning Clifford gates (Hadamard, phase gate, CNOT), preparations of states in
the computational basis, and measurements in the computational basis)
and arbitrary one-qubit unitary gates that are followed by $45.3\%$ depolarizing noise.
Hence such circuits are not significantly more powerful than classical circuits.%
\footnote{The $45.3\%$-bound of~\cite{bcllsu:faulttol} is in fact \emph{tight} if one additionally allows perfect
classical control (i.e., the ability to condition future gates on the earlier classical measurement outcomes):
circuits with perfect stabilizer operations and arbitrary one-qubits gates suffering from less than $45.3\%$ noise,
can simulate perfect quantum circuits. See \cite{reichardt:distilling} and \cite[Section~5]{bcllsu:faulttol}.
These assumptions are not very realistic, however.  In particular the assumption that one can implement
perfect, noise-free CNOTs is a far cry from experimental practice.}
This result is incomparable to ours: the noise models
and the set of allowed gates are different (and we feel ours is more realistic).
In particular, in our case noise hits the qubits going into the $k$-qubit gates but barely affects the one-qubit gates,
while in their case the noise only hits the non-Clifford one-qubit unitaries.

Another related result is by Virmani et al.~\cite{vhp:thresholds}. Instead of depolarizing noise, they
consider ``dephasing noise''. This models phase-errors only: while we can view depolarizing noise of strength
$p$ as applying one of four possible operations (I,X,Y,Z), each with probability $p/4$, dephasing noise of
strength $p$ applies one of two possible operations, I or Z, each with probability $p/2$. Virmani et
al.~\cite{vhp:thresholds} show, among other results, that any quantum circuit consisting of perfect
stabilizer operations, and one-qubit unitary gates that are diagonal in the computational basis and are
followed by  dephasing noise of strength  $29.3\%$, can be efficiently simulated classically. Their result is
incomparable to ours for essentially the same reasons as why the Buhrman et al.~result is incomparable: a
different noise model and a different statement about the resulting power of their noisy quantum circuits.

Finally, it is known that it is impossible to transmit quantum information through a $p$-depolarizing channel
for $p>1/3$~\cite{bdefms:cloning}. This seems to suggest that quantum computation over and above classical
computation is impossible with depolarizing noise of strength greater than~$1/3$, but there is no proof that
this is indeed the case.

\section{Preliminaries}\label{sec:prelim}

Let $\calP=\{I,X,Y,Z\}$ be the set of one-qubit Pauli matrices,
$$
I=\left(\begin{array}{cc}1 & 0\\ 0 & 1\end{array}\right), \ X=\left(\begin{array}{cc}0 &
1\\ 1 & 0\end{array}\right), \ Y=\left(\begin{array}{cc}0 & -i\\ i & 0\end{array}\right), \
Z=\left(\begin{array}{cc}1 & 0\\ 0 & -1\end{array}\right).
$$
and let $\calP_*=\{X,Y,Z\}$. We use $\calP^n$ to denote the set of all
tensor products of $n$ one-qubit Pauli matrices. For a Pauli
matrix $S \in \calP^n$ we define its {\em support}, denoted $\supp(S)$, to be the qubits on which $S$
is not identity.
We sometimes use superscripts to indicate the qubits on which certain operators act.
Thus $I^{\cal A}$ denotes the identity operator applied to the qubits in set $\cal A$.

The set of all $2^n \times 2^n$ Hermitian matrices forms a $4^n$-dimensional
real vector space. On this space we consider the Hilbert-Schmidt inner product,
given by $\inpc{A}{B} := \Tr(A^\dag B) = \Tr(AB)$.
Note that for any $S,S'\in\calP^n$, $\Tr(SS')=2^n$ if $S=S'$ and $0$ otherwise, and hence $\calP^n$
is an orthogonal basis of this space. It follows that we can uniquely express any
Hermitian matrix $\delta$ in this basis as
$$
\delta=\frac{1}{2^n}\sum_{S\in\calP^n}\widehat{\delta}(S)S
$$
where $\widehat{\delta}(S):=\Tr(\delta S)$ are the (real) coefficients.

We now state some easy observations which will be used in the proof of our main result. First, by the orthogonality
of $\calP^n$, it follows that for any $\delta$,
$$
\Tr(\delta^2) =  \frac{1}{2^n} \sum_{S\in\calP^n}\widehat{\delta}(S)^2.
$$
This easily leads to the following observation.

\begin{observation}[Unitary preserves sum of squares]\label{obs:unitary}
For any unitary matrix $U$ and any Hermitian matrix $\delta$, if we denote
$\delta' = U \delta U^\dagger$, then
\begin{align*}
\sum_{S \in \calP^n}\widehat{\delta'}(S)^2 =
2^n \Tr(\delta'^2) =
2^n \Tr(U \delta U^\dagger U \delta U^\dagger) =
2^n \Tr(\delta ^2) =
\sum_{S \in \calP^n}\widehat\delta(S)^2.
\end{align*}
\end{observation}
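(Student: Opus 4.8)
The plan is to chain together three elementary facts. First, the identity $\Tr(\delta^2) = \frac{1}{2^n}\sum_{S\in\calP^n}\widehat\delta(S)^2$ established just above (which itself follows from expanding $\delta$ in the orthogonal Pauli basis and using $\Tr(SS')=2^n\,[S=S']$). Second, the cyclic property of the trace: $\Tr(U\delta U^\dagger U\delta U^\dagger) = \Tr(U\delta^2 U^\dagger) = \Tr(\delta^2 U^\dagger U) = \Tr(\delta^2)$, using $U^\dagger U = I$. Third, the observation that $\delta' = U\delta U^\dagger$ is again Hermitian (since $(U\delta U^\dagger)^\dagger = U\delta^\dagger U^\dagger = U\delta U^\dagger$), so the sum-of-squares identity applies to $\delta'$ as well.

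Concretely, I would write the single displayed chain of equalities already sketched in the statement: apply the sum-of-squares identity to $\delta'$ to get $\sum_S \widehat{\delta'}(S)^2 = 2^n\Tr(\delta'^2)$; substitute $\delta'^2 = U\delta U^\dagger U\delta U^\dagger$; collapse $U^\dagger U$ to the identity and use cyclicity of the trace to obtain $2^n\Tr(\delta^2)$; and finally apply the sum-of-squares identity in the other direction to reach $\sum_S \widehat\delta(S)^2$. Each step is a one-line justification, so the proof is essentially the display itself together with a sentence noting that $\delta'$ is Hermitian and that we used $U^\dagger U = I$ with the cyclic property of $\Tr$.

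There is no real obstacle here; the only thing to be slightly careful about is making sure the sum-of-squares formula is legitimately being applied to $\delta'$, which requires knowing $\delta'$ is Hermitian — but that is immediate. Everything else is just unwinding definitions and using $U^\dagger U = I$.
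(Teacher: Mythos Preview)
Your proposal is correct and matches the paper exactly: the paper presents this observation as a self-contained display, relying on the previously stated identity $\Tr(\delta^2)=\frac{1}{2^n}\sum_S\widehat\delta(S)^2$, the substitution $\delta'=U\delta U^\dagger$, and $U^\dagger U=I$ with cyclicity of the trace. There is no separate proof beyond the chain of equalities you have reproduced and justified.
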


\noindent
This also shows that the operation of conjugating by a unitary matrix, when viewed as a linear
operation on the vector of Pauli coefficients, is an orthogonal transformation.

\begin{observation}[Tracing out qubits]\label{obs:tracing}
Let $\delta$ be some Hermitian matrix on a set of qubits $W$. For $V \subseteq W$, let
$\delta_V=\Tr_{W \setminus V}(\delta)$. Then,
$$\widehat{\delta}(SI^{W \setminus V}) = \Tr( \delta \cdot SI^{W \setminus V})=
\Tr( \delta_V \cdot S) = \widehat{\delta_V}(S).$$
\end{observation}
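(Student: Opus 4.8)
First I would observe that the two outer equalities are nothing but the definition $\widehat{\delta}(S) = \Tr(\delta S)$ of the Pauli coefficients from Section~\ref{sec:prelim}: applied on the qubit set $W$ it gives $\widehat{\delta}(SI^{W\setminus V}) = \Tr(\delta\cdot SI^{W\setminus V})$, and applied on the qubit set $V$ it gives $\widehat{\delta_V}(S) = \Tr(\delta_V\cdot S)$ (here $\delta_V$ is Hermitian, being a partial trace of a Hermitian matrix, so this is a legitimate Pauli coefficient). Hence the entire content of the observation is the middle equality $\Tr(\delta\cdot SI^{W\setminus V}) = \Tr(\delta_V\cdot S)$, which expresses the fact that tracing out $W\setminus V$ is the adjoint, with respect to the Hilbert--Schmidt inner product, of the map that pads an operator on $V$ with $I^{W\setminus V}$.

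The plan is to verify this middle equality by linearity. Both of its sides are real-linear in $\delta$, and every Hermitian matrix on $W$ is a real linear combination of product operators $A\otimes B$ with $A$ acting on $V$ and $B$ acting on $W\setminus V$ --- for instance, one may let $A$ range over $\calP^V$ and $B$ over $\calP^{W\setminus V}$ and invoke the Pauli expansion of Section~\ref{sec:prelim}. It therefore suffices to check the identity for $\delta = A\otimes B$. For such a $\delta$ one has $\delta_V = \Tr_{W\setminus V}(A\otimes B) = \Tr(B)\,A$, while
\[
\Tr(\delta\cdot SI^{W\setminus V}) = \Tr\!\left((A\otimes B)(S\otimes I^{W\setminus V})\right) = \Tr(AS\otimes B) = \Tr(AS)\,\Tr(B) = \Tr\!\left((\Tr(B)\,A)\,S\right) = \Tr(\delta_V\cdot S),
\]
which is exactly the desired identity.

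I do not anticipate any real obstacle here. The only points worth a word of care are: (i) $S$ should be read as a Pauli matrix on the qubit set $V$, so that $SI^{W\setminus V}\in\calP^W$ and both hatted quantities are bona fide Pauli coefficients --- although the middle equality in fact holds verbatim for an arbitrary operator $S$ on $V$; and (ii) a generic Hermitian $\delta$ is not itself a product operator, which is precisely why the argument is routed through linearity rather than manipulating $\delta$ directly. Alternatively one could simply cite the standard defining property of the partial trace, namely that $\Tr(M(A\otimes I)) = \Tr(\Tr_{W\setminus V}(M)\cdot A)$ for every operator $A$ on $V$, but since the rest of the paper is self-contained the short computation above seems preferable.
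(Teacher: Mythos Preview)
Your proposal is correct. The paper does not give a separate proof of this observation at all --- the chain of equalities displayed in the statement is the entire justification, with the outer equalities being the definition of the Pauli coefficients and the middle one being the standard defining property of the partial trace --- so your explicit verification via linearity on product operators is a perfectly valid (and more detailed) expansion of what the paper treats as immediate.
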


\begin{observation}[Noise in the Pauli basis]\label{obs:noise}
Applying a $p$-depolarizing noise $\cal E$ to the $j$-th qubit of Hermitian matrix $\delta$
changes the coefficients as follows:
\begin{align*}
\widehat{{\cal E}(\delta)}(S)=\left\{\begin{array}{rl}
\widehat{\delta}(S) & \mbox{ if }S_j=I\\
(1-p)\widehat{\delta}(S) & \mbox{ if }S_j\neq I\\
\end{array}\right.
\end{align*}
\end{observation}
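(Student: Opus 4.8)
The plan is to use linearity of ${\cal E}$ together with the fact that it touches only qubit $j$, so that everything reduces to understanding how ${\cal E}$ acts on a single one-qubit Pauli matrix. Recall that the linear (trace-preserving) extension of the depolarizing map is ${\cal E}(\sigma)=(1-p)\sigma+\frac{p}{2}\Tr(\sigma)\,I$; since $\Tr(I)=2$ and $\Tr(X)=\Tr(Y)=\Tr(Z)=0$, this immediately gives ${\cal E}(I)=I$ and ${\cal E}(\sigma)=(1-p)\sigma$ for every $\sigma\in\calP_*$.

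First I would write an arbitrary Pauli string $S'\in\calP^n$ as $S'=S'_j\otimes R$, where $S'_j\in\calP$ is its factor on qubit $j$ and $R$ collects the factors on the remaining qubits. Because ${\cal E}$ acts as the identity on those other qubits, ${\cal E}(S')={\cal E}(S'_j)\otimes R$, which by the previous paragraph equals $S'$ when $S'_j=I$ and $(1-p)S'$ when $S'_j\neq I$. Next I would expand $\delta=\frac{1}{2^n}\sum_{S'\in\calP^n}\widehat{\delta}(S')\,S'$, apply ${\cal E}$ term by term using linearity, and observe that the result is again a combination of Pauli strings in which the coefficient of $S'$ is left unchanged when $S'_j=I$ and multiplied by $(1-p)$ when $S'_j\neq I$. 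Since the expansion in the orthogonal basis $\calP^n$ is unique (as recalled in the Preliminaries), comparing coefficients yields exactly the claimed formula for $\widehat{{\cal E}(\delta)}(S)$. An equivalent and perhaps even shorter route is to note that ${\cal E}$ is self-adjoint with respect to the Hilbert--Schmidt inner product, so that $\widehat{{\cal E}(\delta)}(S)=\Tr({\cal E}(\delta)S)=\Tr(\delta\,{\cal E}(S))$, and then substitute the value of ${\cal E}(S)$ computed above.

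I do not expect any genuine obstacle here: the statement is essentially a one-line computation once the single-qubit action of ${\cal E}$ on $I,X,Y,Z$ is recorded. The only point that deserves a moment's care is that $\delta$ is an arbitrary Hermitian matrix, not necessarily a density matrix, so one cannot literally invoke the ``replace the qubit by $I/2$ with probability $p$'' picture; working with the linear extension of ${\cal E}$ (equivalently, directly in the Pauli basis) is precisely what makes the argument go through without assuming $\Tr(\delta)=1$.
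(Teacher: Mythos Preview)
Your argument is correct and complete. The paper itself does not provide a proof of this observation; it is stated as self-evident, with only the informal remark that $\cal E$ ``shrinks'' by a factor $1-p$ all coefficients supported on the $j$-th coordinate. Your proposal simply spells out the easy details behind this, and the route you take (compute ${\cal E}$ on single-qubit Paulis, then use linearity and the tensor structure) is exactly the natural one-line computation the paper is implicitly invoking.
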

\noindent
In other words, $\cal E$ ``shrinks'' by a factor $1-p$ all coefficients that have support on the $j$-th
coordinate.

\begin{observation}\label{obs:outcome}
Let $\rho$ and $\tau$ be two one-qubit states and let $\delta = \rho-\tau$.
Consider the two probability distributions obtained by performing a measurement in
the computational basis on $\rho$ and $\tau$, respectively. Then the variation distance between
these two distributions is $\frac{1}{2}|\widehat{\delta}(Z)|$.
\end{observation}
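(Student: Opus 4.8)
The plan is to compute the variation distance directly from the definition and then translate the matrix entries involved into Pauli coefficients. First I would recall that for two probability distributions $P,Q$ on the two-element set $\{0,1\}$, the variation distance is $\frac12\sum_{b\in\{0,1\}}|P(b)-Q(b)|$, and since $P(0)-Q(0)=-(P(1)-Q(1))$ (both distributions sum to $1$), this equals $|P(0)-Q(0)|$. Applying this with $P(b)=\bra{b}\rho\ket{b}$ and $Q(b)=\bra{b}\tau\ket{b}$, the quantity we want is $|\bra{0}\delta\ket{0}|$ with $\delta=\rho-\tau$.

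Next I would connect $\bra{0}\delta\ket{0}$ to $\widehat\delta(Z)$. By definition $\widehat\delta(Z)=\Tr(\delta Z)=\bra{0}\delta\ket{0}-\bra{1}\delta\ket{1}$. Since $\rho$ and $\tau$ are states, $\Tr(\delta)=\Tr(\rho)-\Tr(\tau)=0$, so $\bra{1}\delta\ket{1}=-\bra{0}\delta\ket{0}$ and hence $\widehat\delta(Z)=2\bra{0}\delta\ket{0}$. Combining the two displays gives that the variation distance is $|\bra{0}\delta\ket{0}|=\frac12|\widehat\delta(Z)|$, as claimed.

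There is essentially no obstacle here: the only point requiring a moment's care is the use of $\Tr(\delta)=0$ (equivalently $\widehat\delta(I)=0$), which is what lets us reduce the two-term sum in the variation distance to the single coefficient $\widehat\delta(Z)$; without the trace-one normalization of $\rho$ and $\tau$ the identity would instead involve both $\widehat\delta(I)$ and $\widehat\delta(Z)$. Alternatively, one can write $\delta=\frac12\big(\widehat\delta(I)I+\widehat\delta(X)X+\widehat\delta(Y)Y+\widehat\delta(Z)Z\big)$, observe that a computational-basis measurement depends only on the diagonal of $\delta$, i.e.\ on $\widehat\delta(I)I+\widehat\delta(Z)Z$, and then set $\widehat\delta(I)=0$; this makes the role of the $Z$-coefficient transparent.
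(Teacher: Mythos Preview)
Your proof is correct and follows essentially the same route as the paper: both reduce the two-outcome variation distance to $|\bra{0}\delta\ket{0}|$, then use $\Tr(\delta)=0$ together with $\ketbra{0}{0}=\tfrac{1}{2}(I+Z)$ (equivalently $\Tr(\delta Z)=\bra{0}\delta\ket{0}-\bra{1}\delta\ket{1}$) to identify this with $\tfrac12|\widehat\delta(Z)|$. The only difference is cosmetic---the paper phrases the key step via $\ketbra{0}{0}=\tfrac12(I+Z)$, while you expand $\Tr(\delta Z)$ in the computational basis.
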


\begin{proof}
Since there are only two possible outcomes for the measurements, the variation distance between the two
distributions is exactly the difference in the probabilities of obtaining the outcome $0$,
which is given by
\begin{align*}
|\Tr((\rho - \tau) \cdot\ketbra{0}{0})|
= \left|\Tr\left(\delta\cdot\frac{I+Z}{2}\right) \right|
 = \frac{1}{2}|\Tr(\delta \cdot Z)|= \frac{1}{2}|\widehat{\delta}(Z)|,
\end{align*}
where we have used $\Tr(\delta)=0$.
\end{proof}

Our final observation follows immediately from the convexity of the function $x^2$.
\begin{observation}[Convexity]\label{obs:convexity}
Let $p_i$ be any probability distribution, and $\delta_i$ a set of Hermitian matrices.
Let $\delta=\sum_i p_i \delta_i$. Then
$$
\sum_{S \in \calP^n} \widehat\delta(S)^2 \le  \sum_i p_i \sum_{S \in \calP^n} \widehat\delta_i(S)^2.
$$
\end{observation}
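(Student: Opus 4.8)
The plan is to reduce the claim to a coordinate-wise application of Jensen's inequality for the convex function $x\mapsto x^2$, exactly as the statement's preamble suggests. The key preliminary observation is that for each fixed Pauli string $S\in\calP^n$, the map $\delta\mapsto\widehat{\delta}(S)=\Tr(\delta S)$ is real-linear in $\delta$; this is immediate from the linearity of the trace, and the values are real because the $\delta_i$ and $S$ are Hermitian. Consequently, writing $\delta=\sum_i p_i\delta_i$, we get $\widehat{\delta}(S)=\sum_i p_i\,\widehat{\delta_i}(S)$ for every $S$.

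Given this, I would fix $S$ and apply convexity of $x^2$ together with the fact that $(p_i)$ is a probability distribution to obtain
$$
\widehat{\delta}(S)^2 = \left(\sum_i p_i\,\widehat{\delta_i}(S)\right)^2 \le \sum_i p_i\,\widehat{\delta_i}(S)^2 .
$$
Summing this inequality over all $S\in\calP^n$ (a finite sum) and interchanging the two finite summations on the right-hand side yields
$$
\sum_{S\in\calP^n}\widehat{\delta}(S)^2 \le \sum_{S\in\calP^n}\sum_i p_i\,\widehat{\delta_i}(S)^2 = \sum_i p_i\sum_{S\in\calP^n}\widehat{\delta_i}(S)^2 ,
$$
which is precisely the asserted bound.

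There is essentially no obstacle here: the entire content is that the Pauli coefficients depend linearly on the matrix, so that a convex mixture of matrices maps to a convex mixture of coefficient vectors, and then squared-Euclidean-norm is convex. The only point worth stating explicitly is the reality of the coefficients (so that squaring and Jensen's inequality are legitimate), which was already recorded in the Preliminaries when noting that $\calP^n$ is an orthogonal basis of the real vector space of Hermitian matrices.
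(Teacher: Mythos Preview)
Your proof is correct and follows exactly the approach the paper indicates: the paper does not spell out the argument but simply remarks that the observation ``follows immediately from the convexity of the function $x^2$,'' which is precisely the coordinate-wise Jensen step you carry out after noting the linearity of $\delta\mapsto\widehat{\delta}(S)$.
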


\section{Proof of Theorem \ref{the:main}}\label{secmain}

In this section we prove Theorem \ref{the:main}. The rough idea is the following.
Fix two arbitrary initial states $\rho$ and $\tau$. Our goal is to show that
after applying the noisy circuit, the state of the output qubit is nearly
the same with both starting states. Equivalently, we can define
$\delta = \rho - \tau$ and show that after applying the noisy circuit to
$\delta$, the ``state" of the output qubit is essentially $0$ (notice
that we can view the noisy circuit as a linear operation, and hence
there is no problem in applying it to $\delta$, which is the difference of two density matrices).
In order to show this, we will examine how the coefficients of $\delta$ in the Pauli
basis develop through the circuit. Initially we might have many large coefficients.
Our goal is to show that the coefficients of the output qubit are essentially~0.
This is established by analyzing the balance between two opposing forces:
noise, which shrinks coefficients by a constant factor (as in Observation~\ref{obs:noise}),
and gates, which can increase coefficients. As we saw in Observation~\ref{obs:unitary},
unitary gates preserve the sum of squares of coefficients. They can, however, ``concentrate"
several small coefficients into one large coefficient. One-qubit operations need not preserve
the sum of squares (a good example is the gate that resets a qubit to the $\ket{0}$ state),
but we can still deal with them by using a known characterization of one-qubit gates.
This characterization allows us to bound the amount by which one-qubit gates
can increase the Pauli coefficients, and very roughly speaking shows that
the gate that resets a qubit to $\ket{0}$ is ``as bad as it gets".

Before continuing with the proof, we introduce some terminology.
{}From now on we use the term \emph{qubit} to mean a wire at a specific time, so there are
$(T+1)n$ qubits (although during the proof we will also consider qubits
that are located between a gate and its associated noise).
We say that a set of qubits $V$ is {\em consistent} if
we can meaningfully talk about a ``state of the qubits of $V$" (see Figure \ref{Fig:circuit}).
More formally, we define a consistent set as follows.
The set of all qubits at time $0$ and all its subsets are consistent. If $V$ is some consistent set
of qubits, which contains all input qubits $IN$ of some gate (possibly a one-qubit identity gate),
then also $(V\setminus IN) \cup OUT$ and all
its subsets are consistent, where $OUT$ denotes the gate's output qubits. Note that
here we think of the noise as being part of the gate.
For a consistent set $V$ and a state (or more generally, a Hermitian matrix) $\rho$,
we denote the state of $V$ when the circuit is applied with the initial state $\rho$,
by $\rho_V$. In other words, $\rho_V$ is the state one obtains by applying some initial
part of the circuit to $\rho$, and then tracing out from the resulting state all qubits that are not in $V$

If $v$ is a qubit, we use $\dist(v)$ to denote its distance
from the input, i.e., the level of the gate just preceding it.
The qubits of the starting state have $\dist(v)=0$.
For a nonempty set $V$ of qubits we define $\dist(V)=\min\{\dist(v)\mid
v\in V\}$, and extend it to the empty set by $\dist(\emptyset)=\infty$.
Note that $\dist(V)$ does not increase if we add qubits to $V$.

In the rest of this section we prove the following lemma, showing
that a certain invariant holds for all consistent sets $V$.
\begin{lemma}\label{lem:deltaT}
For all $\eps_1>0$ and $\eps_k > 1-\sqrt{2^{1/k}-1}$ there exists a $\theta <1$ such that
the following holds. Fix any $T$-level circuit in our model, let $\rho$ and $\tau$
be some arbitrary initial states, and let $\delta=\rho-\tau$.
Then for every consistent  $V$,
 \begin{equation}\label{eq:invariant}
 \sum_{S\in\calP^V}\widehat{\delta_V}(S)^2 \leq 2\cdot 2^{|V|}\cdot \theta^{\dist(V)},
 \end{equation}
 or equivalently,
 \begin{align*}
 \Tr(\delta_V^2) \leq 2 \cdot \theta^{\dist(V)}.
 \end{align*}
\end{lemma}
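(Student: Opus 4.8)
The plan is to prove the invariant \eqref{eq:invariant} by induction on the structure of consistent sets, following the recursive definition given above. The base case is a set $V$ of qubits at time $0$. Here $\dist(V)=0$, and $\delta_V = \rho_V - \tau_V$ is the difference of two density matrices on $|V|$ qubits. Using $\Tr(\delta_V^2) \le \Tr(\rho_V^2) + \Tr(\tau_V^2) \le 2$ (since each density matrix has $\Tr(\sigma^2)\le 1$, and one can bound the cross term, or more simply bound $\Tr(\delta_V^2)$ directly by expanding and using $\|\rho_V\|_{\mathrm{tr}},\|\tau_V\|_{\mathrm{tr}}\le 1$), we get $\sum_S \widehat{\delta_V}(S)^2 = 2^{|V|}\Tr(\delta_V^2) \le 2\cdot 2^{|V|} = 2\cdot 2^{|V|}\theta^0$, as desired.

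For the inductive step, suppose $V$ is obtained from a consistent set $V'$ as $V = (V' \setminus IN) \cup OUT$, where $IN \subseteq V'$ are the input qubits of some gate $G$ and $OUT$ are its output qubits (or $V$ is a subset of such a set, which only helps since the sum of squares can only decrease when tracing out by Observation~\ref{obs:tracing}). We track how $\sum_S \widehat{\delta}(S)^2$, restricted to the relevant qubits, changes as we pass through the noise and then the gate. The key point is to split into the two types of gates. For a $k$-qubit gate that is a mixture of unitaries $\sum_i p_i U_i$: first the $\eps_k$-depolarizing noise on each of the $k$ input wires shrinks, by a factor $(1-\eps_k)$, every Pauli coefficient whose support meets that wire (Observation~\ref{obs:noise}); then conjugation by each $U_i$ preserves the sum of squares (Observation~\ref{obs:unitary}); then the mixture can only decrease it (Observation~\ref{obs:convexity}). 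The crucial quantitative step is to argue that the noise, acting on all $k$ input wires, shrinks the total weight enough to absorb the fact that $\dist(V)$ can be one less than $\dist(V')$ — this is precisely where the condition $\eps_k > 1 - \sqrt{2^{1/k}-1}$, i.e. $(1-\eps_k)^2 < 2^{1/k}-1$, must enter, presumably via a careful accounting of which Pauli strings have support on how many of the $k$ wires and how the factor $2^{|V|}$ versus $2^{|V'|}$ interacts with the depth drop. For one-qubit gates, we invoke a structural characterization of one-qubit CPTP maps (the promised ``known characterization'') to bound how much a single coefficient can grow, together with the following $\eps_1$-depolarizing noise on the output, which gives us a factor $(1-\eps_1)$ of slack on the support of that qubit; the worst case is the reset-to-$\ket0$ map, and one checks it is still controlled.

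The main obstacle I expect is the $k$-qubit gate case, specifically the combinatorial bookkeeping that shows the product of shrinking factors over the $k$ noisy input wires beats the loss incurred by $\dist(V) = \dist(V') - 1$ and the change in the normalization $2^{|V|}$. Concretely, one wants an inequality roughly of the form: if before the gate the relevant weight is at most $2\cdot 2^{|V'|}\theta^{\dist(V')}$, then after noise-plus-gate it is at most $2\cdot 2^{|V|}\theta^{\dist(V')-1}$; dividing through, this reduces to showing that the noise shrinks the Pauli weight on the $k$ wires by at least a factor $\theta / 2^{k - (|V|-|V'|)}$ or similar, and then choosing $\theta<1$ close enough to $1$ as a function of $\eps_k, k$ (and separately handling $\eps_1$) to make everything consistent. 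Getting the exponents and the factor-of-$2$ in the invariant to line up cleanly — which is surely why the statement carries that seemingly arbitrary leading $2$ — is the delicate part; the rest is a routine induction over the circuit's gate structure, using that $\dist(\cdot)$ is monotone under adding qubits and that tracing out only decreases the left-hand side.
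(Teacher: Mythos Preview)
Your overall architecture matches the paper's: induction over consistent sets, base case at time $0$ via $\Tr(\delta_V^2)\le 2$, and a case split on whether the last gate is a $k$-qubit mixture of unitaries or a one-qubit CPTP map. The one-qubit case is also handled as you say, via the Ruskai--Szarek--Werner normal form.

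However, there is a genuine gap in your $k$-qubit step. You frame it as: apply the inductive bound once to the predecessor set $V'$ (so $\sum_S\widehat{\delta_{V'}}(S)^2\le 2\cdot 2^{|V'|}\theta^{\dist(V')}$), and then argue that the depolarizing noise on the $k$ input wires shrinks this total enough to pay for the depth drop. That cannot work. Noise only shrinks coefficients whose support meets $\calA$; the Pauli weight sitting entirely on $V'\setminus\calA$ is untouched. With only the single aggregate bound on $V'$ you have no control over how the weight is distributed among the strata $\{S:\;|\supp(S)\cap\calA|=j\}$, so in the worst case (all weight at $j=0$) the noise gains you nothing and the induction fails.

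The missing idea, which is the crux of the paper's argument, is to apply the induction hypothesis to \emph{every} subset $(V'\setminus\calA)\cup a$ for $a\subseteq\calA$, not just to $V'$. By Observation~\ref{obs:tracing}, the sum of squared coefficients over $\{S\in\calP^{V'}:\supp(S)\cap\calA\subseteq a\}$ equals the full Pauli sum for the smaller set $(V'\setminus\calA)\cup a$, so induction gives a bound with the factor $2^{|V'|-k+|a|}$ rather than $2^{|V'|}$. Rewriting $\mu^{2|\supp(R)|}$ as $\sum_{a\supseteq\supp(R)}\mu^{2|a|}(1-\mu^2)^{k-|a|}$ and plugging in these $2^k$ inductive bounds produces exactly $\sum_{a\subseteq\calA}2^{|a|}\mu^{2|a|}(1-\mu^2)^{k-|a|}=(1+\mu^2)^k$, and the hypothesis $\mu^2<2^{1/k}-1$ then gives $(1+\mu^2)^k<2$, absorbing both the depth drop and the change from $2^{|V'|-k}$ to $2^{|V|}$. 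The same multi-subset trick (with $a=\emptyset$ and $a=\{A\}$) is also what makes the one-qubit case close; a single application of the hypothesis to $V'$ would again be insufficient there.
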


\noindent
In particular, if we consider the consistent set $V$ that contains the designated
output qubit at time $T$, then we get that $\widehat{\delta_V}(Z)^2 \leq 4\theta^T$.
By Observation~\ref{obs:outcome}, this implies Theorem~\ref{the:main}.

\subsection{Proof of Lemma~\ref{lem:deltaT}}\label{ssecproofinvariant}

The proof of the invariant is by induction on the sets $V$. At the base of the induction are all sets
$V$ contained entirely within time $0$. All other sets are handled in the induction step.
In order to justify the inductive proof, we need to provide an ordering on the consistent sets $V$
such that for each $V$, the proof for $V$ uses the inductive hypothesis only on sets $V'$
that appear before $V$ in the ordering.
As will become apparent from the proof, if we denote by ${\rm latest}(V)$ the maximum time at which $V$ contains a qubit,
then each $V'$ for which we use the induction hypothesis has strictly less qubits than $V$ at time ${\rm latest}(V)$.
Therefore, we can order the sets $V$ first in increasing order of ${\rm latest}(V)$ and then in
increasing order of the number of qubits at time ${\rm latest}(V)$.

\subsubsection{Base case}

Here we consider the case that $V$ is fully contained within time $0$.
If $V=\emptyset$ then both sides of the invariant are zero, so from now
on assume $V$ is nonempty. In this case $\dist(V)=0$.
The matrix $\delta_V$ is the difference of two density matrices, say $\delta_V = \rho_V - \tau_V$, and hence
$\Tr(\delta_V^2) = \Tr(\rho_V^2) + \Tr(\tau_V^2) - 2 \Tr(\rho_V \tau_V) \le 2$,
and the invariant is satisfied.

\subsubsection{Induction step}

Let $V''$ be any consistent set containing at least one qubit at time greater than zero.
Our goal in this section is to prove the invariant for $V''$.
Consider any of the qubits of $V$ located at time ${\rm latest}(V)$ and let $G$ be the gate
that has this qubit as one of its output qubits.
We now consider two cases, depending on whether $G$ is a $k$-qubit gate or a one-qubit gate.

\paragraph{\underline{Case 1}: $G$ is a $k$-qubit gate.} Here we consider the case that $G$ is a probabilistic
mixture of $k$-qubit unitaries. First note that by Observation~\ref{obs:convexity} it suffices
to prove the invariant for $k$-qubit unitaries. So assume $G$ is a $k$-qubit unitary acting
on the qubits $\calA=\{A_1,\dots,A_k\}$. Let $\calA'=\{A_1',\dots, A_k'\}$ be the qubits
after the $\epsilon_k$-noise but before the gate $G$ and
$\calA''=\{A_1'',\dots,A_k''\}$ the qubits after $G$ (see Figure~\ref{fig:induction}). By our choice of $G$,
$\calA'' \cap V''\neq \emptyset$. Define $V'=(V'' \setminus \calA'') \cup\calA'$ and $V=(V''
\setminus \calA'') \cup \calA$. Note that $V$ and its subsets are consistent sets with strictly fewer
qubits than $V''$ at time ${\rm latest}(V'')$, and hence we can apply the induction hypothesis to them.

\begin{figure}[h]
\center{\epsfxsize=4in \epsfbox{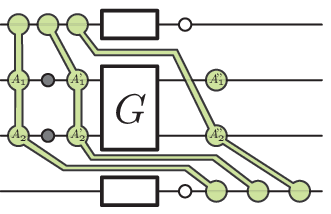}} \caption{An example showing the sets $V$, $V'$,
and $V''$ for a two-qubit gate $G$.}
\label{fig:induction}
\end{figure}

Recall that our goal is to prove the invariant Eq.~(\ref{eq:invariant}) for $V''$. To begin with, using
Observation~\ref{obs:tracing},
\begin{equation}\label{eq:lefthand}
 \sum_{S\in\calP^{V''}}\widehat{\delta_{V''}}(S)^2 \leq \sum_{S\in\calP^{V'' \cup \calA''}}
 \widehat{\delta_{V''\cup \calA''}}(S)^2.
  \end{equation}
Because $G$ (which maps $\delta_{V'}$ to $\delta_{V'' \cup \calA''}$) is unitary, it preserves the
sum of squares of $\widehat{\delta}$-coefficients (see Observation~\ref{obs:unitary}), so the right hand side of
\eqref{eq:lefthand} is equal to
\begin{align*}
\sum_{S\in\calP^{V'}}\widehat{\delta_{V'}}(S)^2=
 \sum_{S\in\calP^{V' \setminus  \calA'}~}\sum_{R\in\calP^{\calA'}~}\widehat{\delta_{V'}}(RS)^2.
\end{align*}
Since the only difference between $\delta_V$ and $\delta_{V'}$ is noise on the qubits $A_1,\dots,A_k$, using
Observation \ref{obs:noise} and denoting $\mu=1-\epsilon_k$, we get that the above is at most
\begin{align*}
& \sum_{S\in\calP^{V \setminus  \calA}}
\sum_{R\in\calP^\calA}
\mu^{2|\supp(R)|}
\widehat{\delta_V}(RS)^2 \\
=& \sum_{S\in\calP^{V \setminus \calA}}
\sum_{a \subseteq \calA}
\mu^{2|a|}(1-\mu^2)^{k-|a|}
\sum_{ R\in\calP^a\otimes I^{\calA \setminus a}} \widehat{\delta_V}(RS)^2,
\end{align*}
where the equality follows by noting that for any fixed $S$ and any $R \in \calP^\calA$, the term
$\widehat{\delta_V}(RS)^2$, which appears with coefficient $\mu^{2|\supp(R)|}$ on the left hand side,
appears with the same coefficient $\sum_{a \supseteq \supp(R)} \mu^{2|a|}(1-\mu^2)^{k-|a|} =
\mu^{2|\supp(R)|}$ on the right hand side.
By rearranging and using Observation \ref{obs:tracing} we get that the above is equal to
\begin{align*}
& \sum_{a \subseteq \calA} \mu^{2|a|}(1-\mu^2)^{k-|a|}\sum_{S\in\calP^{(V \setminus
\calA)\cup a} ~~~} \widehat{\delta_{(V \setminus \calA)\cup a}}(S)^2 \\
\leq  & \sum_{a \subseteq \calA} \mu^{2|a|}(1-\mu^2)^{k-|a|}2\cdot 2^{|(V  \setminus  \calA)\cup a|}
\cdot\theta^{\dist((V  \setminus  \calA)\cup a)}
\end{align*}
where we used the inductive hypothesis.
Note that $\dist((V \setminus \calA)\cup a) \ge \dist(V)$, so the above is
\begin{align}\label{eq:proofend}
\leq  & ~2\cdot 2^{|V \setminus \calA|}\cdot\theta^{\dist(V)}\sum_{a \subseteq
\calA} 2^{|a|}\mu^{2|a|}(1-\mu^2)^{k-|a|} \nonumber \\
= & ~2\cdot 2^{|V \setminus \calA|}\cdot\theta^{\dist(V)}((1-\mu^2)+2\mu^2)^k \nonumber \\
= & ~2\cdot 2^{|V \setminus \calA|}\cdot\theta^{\dist(V)}(1+\mu^2)^k.
\end{align}
Note that $|V \setminus \calA| \leq |V''|-1$ and $\dist(V'')-1 \le \dist(V)$,
so the right hand side is bounded by
$$\leq 2 \cdot 2^{|V''|-1}\cdot\theta^{\dist(V'')-1}(1+\mu^2)^k.$$
Since $\epsilon_k > 1-\sqrt{2^{1/k}-1}$, we have that $(1+\mu^2)^k \le 2 \theta$
if $\theta$ is close enough to $1$, so we can finally bound the last expression by
$$\leq 2 \cdot 2^{|V''|}\cdot\theta^{\dist(V'')}$$ which proves the invariant for $V''$.

\paragraph{\underline{Case 2}: $G$ is a one-qubit gate.}
Before proving the invariant, we need to prove the following property of
completely-positive trace-preserving (CPTP) maps on one qubit.
\begin{lemma}\label{lem:onequbit}
For any CPTP map $G$ on one qubit there exists a $\beta\in[0,1]$ such that
the following holds. For any Hermitian matrix $\delta$, if we let $\delta'$ denote
the result of applying $G$ to $\delta$, then we have
$$ \widehat{\delta'} (X)^2 + \widehat{\delta'} (Y)^2 + \widehat{\delta'} (Z)^2 \le
   (1-\beta) \cdot \widehat{\delta}(I)^2 + \beta \cdot ( \widehat{\delta}(X)^2 + \widehat{\delta}(Y)^2 + \widehat{\delta}(Z)^2).$$
\end{lemma}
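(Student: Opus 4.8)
The plan is to use the standard affine-map picture of one-qubit CPTP maps on the Bloch ball. Write a Hermitian matrix as $\delta = \frac{1}{2}(\widehat\delta(I) I + \widehat\delta(X) X + \widehat\delta(Y) Y + \widehat\delta(Z) Z)$, and collect the ``Pauli part'' into a vector $r = (\widehat\delta(X),\widehat\delta(Y),\widehat\delta(Z))^T \in \R^3$, with scalar part $c = \widehat\delta(I)$. A CPTP map $G$ on one qubit acts affinely: it sends $(c,r)$ to $(c, Mr + c\,t)$ for some real $3\times 3$ matrix $M$ and translation vector $t\in\R^3$ (the top-left entry is $1$ and the top row is $0$ because $G$ is trace-preserving and maps Hermitian to Hermitian). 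Thus $\widehat{\delta'}(X)^2+\widehat{\delta'}(Y)^2+\widehat{\delta'}(Z)^2 = \norm{Mr + c\,t}^2$, and the goal becomes
\begin{equation*}
\norm{Mr + c\,t}^2 \le (1-\beta)\, c^2 + \beta\, \norm{r}^2
\end{equation*}
for some $\beta\in[0,1]$ depending only on $G$ (i.e.\ on $M,t$).

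First I would establish a bound on $\norm{M}$, the operator (spectral) norm of $M$: I claim $\norm{M}\le 1$. This follows from complete positivity via the characterization that the ``unital part'' of a qubit channel is a contraction — concretely, by the singular-value decomposition one can write $G$ (up to pre- and post-composition with unitaries, which act as rotations on the Bloch vector and hence preserve all the norms in the statement) in the canonical form where $M=\mathrm{diag}(\lambda_1,\lambda_2,\lambda_3)$ with $|\lambda_i|\le 1$; the constraints $|\lambda_i|\le 1$ are part of the Fujiwara–Algoet / King–Ruskai conditions for a map to be a legitimate channel, and in any case $\norm{M}\le 1$ can be derived directly from the fact that $G$ maps the Bloch ball into itself together with trace preservation. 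Second, I would similarly note that $\norm{t}\le 1$: taking $\delta$ to be the maximally mixed state ($c=1$, $r=0$) gives $Mr+ct = t$ as the Bloch vector of a state, hence $\norm{t}\le 1$. More is true and will be needed: since $G$ maps \emph{every} state into the Bloch ball, for all unit vectors $\hat r$ we have $\norm{M\hat r + t}\le 1$, and likewise $\norm{{-}M\hat r + t}\le 1$.

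Now I would prove the inequality by splitting into the two regimes $\norm{r}\le |c|$ and $\norm{r}> |c|$. If $\norm{r}\le|c|$, then $r = |c|\,\hat r$ scaled down, so by convexity of $\norm\cdot^2$ along the segment from $-|c|\hat r$ to $|c|\hat r$ we get $\norm{Mr+ct}^2 \le \max(\norm{|c|M\hat r + ct}^2, \norm{-|c|M\hat r+ct}^2) \le c^2$ (using $\norm{\pm M\hat r + \mathrm{sign}(c)\,t}\le 1$), which is at most $(1-\beta)c^2+\beta\norm r^2$ for \emph{any} $\beta\in[0,1]$ provided $c^2 \ge \norm r^2$ — wait, that is automatic in this regime only if we have the right direction; more carefully, when $\norm r\le |c|$ we want the RHS $\ge c^2$, i.e.\ $\beta(\norm r^2 - c^2)\ge 0$, which holds iff $\beta \le$ ... this forces care, so instead I would treat the bound $\norm{Mr+ct}^2\le c^2$ directly and note $c^2 = (1-\beta)c^2 + \beta c^2 \le (1-\beta)c^2 + \beta\norm r^2$ fails when $\norm r<|c|$. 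Hence the cleaner route: set $\beta := \sup_{\hat r}\norm{M\hat r}^2 = \norm M^2 \le 1$ after first reducing, by the unitary normal form, to $t$ and $M$ \emph{simultaneously diagonal} with $t$ supported where $M$'s action is weakest. Concretely, in the canonical basis $M=\mathrm{diag}(\lambda_1,\lambda_2,\lambda_3)$, $t=(0,0,\tau_3)$ with $\lambda_3^2 + \tau_3^2$ ... is governed by the channel conditions $(\lambda_1\pm\lambda_2)^2 \le (1\pm\lambda_3)^2 - \tau_3^2$ in some ordering. The key inequality I need is the ``$z$-direction'' contraction: $\lambda_3^2 + \tau_3^2 \le 1$ would be ideal but is false in general (amplitude damping has $\lambda_3 = 1-\gamma$, $\tau_3=\gamma$, giving $\lambda_3^2+\tau_3^2 = 1-2\gamma+2\gamma^2 \le 1$ — actually fine here), and more generally the correct statement following from the King–Ruskai conditions is $\lambda_i^2 + \tau_i^2 \le 1$ whenever $\tau_j = 0$ for $j\ne i$. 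Granting this, for each coordinate $i$, $(\lambda_i r_i + c\,\tau_i)^2 \le (\lambda_i^2 + \tau_i^2)(r_i^2 + c^2) \le r_i^2 + c^2$ by Cauchy–Schwarz — summing over $i$ gives $\norm{Mr+ct}^2 \le \norm r^2 + 3c^2$, which is too weak. The obstacle is exactly getting the clean coefficient split. The right fix: use $(\lambda_i r_i + c\tau_i)^2 \le \lambda_i^2 r_i^2/\beta_i + c^2\tau_i^2/(1-\beta_i)$...

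\textbf{Main obstacle.} The genuine difficulty is pinning down the single constant $\beta\in[0,1]$ that works uniformly over all Hermitian $\delta$: one must extract from complete positivity the precise trade-off between how much $M$ can stretch the Pauli part and how much the translation $t$ (fed by the $I$-coefficient) can add. I expect the cleanest path is: reduce to canonical form via unitaries; then for the remaining essentially-one-dimensional problem (the $z$ axis, where both $\lambda_3$ and $\tau_3$ live) prove the scalar inequality $(\lambda_3 s + \tau_3)^2 \le (1-\beta)+\beta s^2$ for all $s\in[-1,1]$ with $\beta := 1 - \max_{s\in[-1,1]}\bigl((\lambda_3 s+\tau_3)^2 - \beta s^2\bigr)$... which is circular, so instead: define $\beta$ to be the smallest value in $[0,1]$ such that the convex quadratic $q(s) = (1-\beta) + \beta s^2 - (\lambda_3 s + \tau_3)^2$ is nonnegative on all of $\R$ — equivalently (for $\beta \ge \lambda_3^2$) such that its discriminant is $\le 0$, namely $\lambda_3^2\tau_3^2 \le (\beta - \lambda_3^2)(1-\beta-\tau_3^2)$; a valid $\beta\in(\lambda_3^2, 1-\tau_3^2)$ exists precisely because the channel condition guarantees $(1-\lambda_3^2)(1-\tau_3^2)$ is large enough, i.e. $\lambda_3^2 + \tau_3^2 \le 1$ together with a further inequality from the $\lambda_1,\lambda_2$ constraints. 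Combining this optimal $\beta_3$ for the $z$ axis with the trivial $\beta_1 = \lambda_1^2 \le 1$, $\beta_2=\lambda_2^2\le 1$ for the other two axes (where $\tau_i=0$), and taking $\beta := \max(\beta_1,\beta_2,\beta_3)\in[0,1]$, gives $\norm{Mr+ct}^2 = \sum_i (\lambda_i r_i + c\tau_i)^2 \le \sum_i \bigl((1-\beta_i)c^2\,[\tau_i\ne 0] + \beta_i r_i^2\bigr)$ — and after verifying the bookkeeping (only the $z$-term contributes a $(1-\beta)c^2$, and $\beta_i\le\beta$ lets us replace each $\beta_i$ by $\beta$ at the cost of the inequality going the right way since $r_i^2\ge 0$), this yields $\norm{Mr+ct}^2 \le (1-\beta)c^2 + \beta\norm r^2$, as desired. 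I would present the unitary reduction and the scalar discriminant computation as the two technical lemmas and keep the King–Ruskai channel inequalities as the cited input.
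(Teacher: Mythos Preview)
Your overall strategy---pass to the Bloch/affine picture, reduce to a canonical form, and then prove a coordinate-wise quadratic inequality---is the same as the paper's. But there is a genuine gap in your reduction step. The singular-value decomposition of $M$ lets you diagonalize $M$ via pre- and post-rotations, but after that the transformed shift vector $O_1 t$ is whatever it is; you have \emph{no remaining unitary freedom} to force $t=(0,0,\tau_3)$. For a general qubit CPTP map in canonical form, $t$ can have all three components nonzero. Your ``essentially one-dimensional'' $z$-axis analysis therefore does not apply to an arbitrary channel. The paper sidesteps this by first invoking the Ruskai--Szarek--Werner structure theorem: every qubit CPTP map is a \emph{convex combination} of maps of the form $U_1\circ J\circ U_2$, where $J$ is extremal with $M=\mathrm{diag}(\lambda_1,\lambda_2,\lambda_1\lambda_2)$ and $t=(0,0,\pm\sqrt{(1-\lambda_1^2)(1-\lambda_2^2)})$. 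Convexity (your own Observation-style argument) then reduces the lemma to these extremal $J$, for which $t$ genuinely is along a single axis.

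Even granting $t=(0,0,\tau_3)$, your final bookkeeping is off. Summing the three coordinate inequalities gives $(1-\beta_3)c^2+\beta_1 r_1^2+\beta_2 r_2^2+\beta_3 r_3^2$; replacing each $\beta_i$ by $\beta=\max_i\beta_i$ is fine for the $r_i^2$ terms, but the $c^2$ coefficient stays $(1-\beta_3)$, which exceeds $(1-\beta)$ whenever $\beta_1$ or $\beta_2$ is the maximum. You would need to show that $\beta_3$ can be chosen $\ge\max(\lambda_1^2,\lambda_2^2)$ within its valid discriminant range, and this is exactly where the extremal relations $\lambda_3=\lambda_1\lambda_2$, $\tau_3^2=(1-\lambda_1^2)(1-\lambda_2^2)$ become essential: with them, Cauchy--Schwarz on $(\pm\sqrt{1-\lambda_1^2}\,a,\lambda_1 b)$ and $(\sqrt{1-\lambda_2^2},\lambda_2)$ gives $(\tau_3 a+\lambda_1\lambda_2 b)^2\le(1-\lambda_1^2)a^2+\lambda_1^2 b^2$ directly, i.e.\ $\beta_3=\lambda_1^2=\max(\lambda_1^2,\lambda_2^2)$ on the nose. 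That is the key step you are missing; once you use the RSW extremal decomposition, your discriminant analysis collapses to this one Cauchy--Schwarz line.
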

\begin{proof}
The proof is based on the characterization of trace-preserving completely-positive maps on one qubit
due to Ruskai, Szarek, and Werner~\cite[Sections~1.2 and~1.3]{rsw:cpqubitmaps}.
This characterization implies that any one-qubit gate $G$ can be written as a convex combination of gates of the form
$U_1\circ J\circ U_2$. Here $U_1$ and $U_2$ are one-qubit unitaries (acting on the density matrix by conjugation),
and $J$ is a one-qubit map that in the Pauli basis has the form
$$
J=\left(\begin{array}{cccc}
1 & 0 & 0 & 0\\
0 & \lambda_1 & 0 & 0\\
0 & 0 & \lambda_2 & 0\\
t & 0 & 0 & \lambda_1\lambda_2
\end{array}\right)
$$
for some $\lambda_1,\lambda_2\in[-1,1]$ and $t=\pm \sqrt{(1-\lambda_1^2)(1-\lambda_2^2)}$.

First observe that by the convexity of the square function, it suffices to prove the lemma
for $G$ of the form $U_1\circ J\circ U_2$ (with the resulting $\beta$ being the appropriate
average of the individual $\beta$'s). Next note that since $U_1$ and $U_2$ are unitary,
they act on the vector of coefficients $(\widehat{\delta}(X), \widehat{\delta}(Y), \widehat{\delta}(Z))$
as an orthogonal transformation, and hence leave the sum of squares invariant. This shows
that it suffices to prove the lemma for a map $J$ as above. For this map,
\begin{align*}
\widehat{\delta'} (X)^2 + \widehat{\delta'} (Y)^2 + \widehat{\delta'} (Z)^2 =
\lambda_1^2 \widehat{\delta} (X)^2 + \lambda_2^2 \widehat{\delta}(Y)^2 + (t \widehat{\delta}(I) + \lambda_1\lambda_2 \widehat{\delta}(Z))^2.
\end{align*}
Assume without loss of generality
that $\lambda_1^2\geq\lambda_2^2$. Applying Cauchy-Schwarz to the two 2-dimensional vectors $(\pm \sqrt{1-\lambda_1^2}a,\lambda_1 b)$ and
$(\sqrt{1-\lambda_2^2},\lambda_2)$, we get that for any $a,b \in \R$, $(t
a+\lambda_1\lambda_2 b)^2\leq (1-\lambda_1^2)a^2 + \lambda_1^2 b^2$. Hence the above
expression is upper bounded by
$$
\lambda_1^2 \widehat{\delta} (X)^2 + \lambda_1^2 \widehat{\delta}(Y)^2 + (1-\lambda_1^2)\widehat{\delta}(I)^2 + \lambda_1^2 \widehat{\delta}(Z)^2
$$
and we complete the proof by choosing $\beta = \lambda_1^2$.
\end{proof}

Let $A$ be the qubit $G$ is acting on, and recall that our goal is to prove the invariant
for the set $V''$. Denote by $A'$ the qubit
of $G$ after the gate but before the $\eps_1$ noise, and by $A''$ the qubit after the noise. As before,
by our choice of $G$, we have $A'' \in V''$. Let $\calA = \{A\}$, $\calA' = \{A'\}$, $\calA'' = \{A''\}$.
Define $V'=(V'' \setminus \calA'') \cup
\calA'$ and $V=(V'' \setminus \calA'') \cup \calA$ and notice that $|V|=|V'|=|V''|$.
By using Lemma~\ref{lem:onequbit}, we obtain a $\beta\in[0,1]$ such that
\begin{align*}
\sum_{S\in\calP^{V''}}\widehat{\delta_{V''}}(S)^2  &\le
\sum_{S\in\calP^{V' \setminus
\calA'}}\Bigg(\widehat{\delta_{V'}}(I S)^2+
(1-\epsilon_1)^2\sum_{R\in\calP_*^{\calA'}}\widehat{\delta_{V'}}(RS)^2\Bigg)\\
& \le \sum_{S\in\calP^{V \setminus \calA}}\Bigg((1+(1-\epsilon_1)^2(1-2\beta))\widehat{\delta_V}(I
S)^2+
(1-\epsilon_1)^2 \beta \sum_{R\in\calP^\calA}\widehat{\delta_V}(RS)^2\Bigg).
\end{align*}
By applying the induction hypothesis to both $V \setminus \calA$ and $V$, we can
upper bound the above by
\begin{align*}
& (1+(1-\epsilon_1)^2(1-2\beta))\cdot 2\cdot 2^{|V|-1}\cdot\theta^{\dist(V \setminus \calA)}
+(1-\epsilon_1)^2 \beta \cdot 2\cdot 2^{|V|}\cdot\theta^{\dist(V)}\\
&\leq \frac{1+(1-\epsilon_1)^2}{2\theta}\cdot 2\cdot 2^{|V''|}\cdot\theta^{\dist(V'')}
\end{align*}
where we used that $|V|=|V''|$, and $\dist(V'') - 1\leq \dist(V) \leq \dist(V \setminus \calA)$.
Hence the invariant remains valid if we choose $\theta<1$ such that $1+(1-\epsilon_1)^2\leq 2\theta$.

\section{Arbitrary one-qubit gates and CNOT gates}\label{seccnot}

In this section we consider the case where CNOT is the only allowed gate acting on more than one qubit. We still allow
arbitrary one-qubit gates. The proof follows along the lines of that of Theorem~\ref{the:main} with one small modification.
As before, we will prove that for all $\eps_1>0$ and $\eps_2>1-1/\sqrt{2} \approx
0.293$ the invariant, Eq.~(\ref{eq:invariant}), holds. The proof for the case that $G$ is a one-qubit gate
holds without change. We will give the modified proof for the case that $G$ is a CNOT gate. The idea for the
improved bound is to make use of the fact that the CNOT gate merely permutes the 16 elements of
$\cal P\otimes\cal P$, and does not map elements from $I\otimes \calP_*$ to
$\calP_*\otimes I$ or vice versa (as illustrated in Figure~\ref{fig:cnot}).
As a result we need to apply the induction hypothesis on one less term, which in turn
improves the bound.

\begin{figure}[h]
\center{\epsfxsize=1.6in \epsfbox{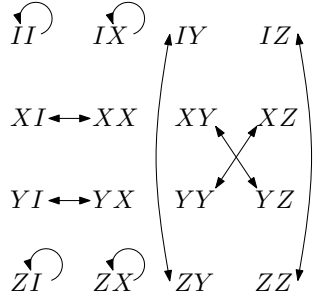}}
\caption{The action of CNOT on $\calP\otimes\calP$ under conjugation with
the control wire corresponding to the first qubit.}
\label{fig:cnot}
\end{figure}

Assume the CNOT acts on qubits $\calA=\{A,B\}$, with $\calA'=\{A',B'\}$ and $\calA''=\{A'',B''\}$ as before, where again $\calA'' \cap V'' \neq \emptyset$.
If both $A''$ and $B''$ are contained in $V''$ then the proof of the general case (cf. Eq.~(\ref{eq:proofend}))
already gives a bound of
$$
2 \cdot 2^{|V \setminus \calA|} \cdot \theta^{\dist(V)} (1+\mu^2)^2\leq 2\cdot 2^{|V''|-2}\cdot
\theta^{\dist(V'')-1}(1+\mu^2)^2 \leq 2\cdot 2^{|V''|}\cdot \theta^{\dist(V'')}$$
where the last inequality holds for all $\mu <1$. Hence it suffices to consider the case
that exactly one of $A''$ and $B''$ is in $V''$. Assume without loss of generality
that $A'' \in V''$ and $B'' \notin V''$. As before, our goal is to upper bound
$$
\sum_{S\in\calP^{V''}}\widehat{\delta_{V''}}(S)^2=\sum_{S\in\calP^{V''}}\widehat{\delta_{V'' \cup B''}}(SI^{B''})^2,
$$
where the equality follows from Observation~(\ref{obs:tracing}).
Because of the property of CNOT mentioned above, we can now upper bound this by
\begin{align*}
\sum_{S\in\calP^{V' \setminus \calA'}} \Big( \widehat{\delta_{V'}}(I^{A'} I^{B'} S)^2
+\sum_{R\in\calP_*^{A'}}\widehat{\delta_{V'}}(R I^{B'} S)^2 +\sum_{R\in\calP_*^{A'}\otimes\calP_*^{B'}}\widehat{\delta_{V'}}(R S)^2
\Big).
\end{align*}
This is the crucial change compared to the case of general two-qubit gates (the latter case also includes
a term of the form $\sum_{R\in\calP_*^{B'}}\widehat{\delta_{V'}}(I^{A'}R S)^2$).
The rest of the proof is similar to the earlier proof. Using the induction hypothesis we can upper bound
the above by
\begin{align*}
&\sum_{S\in\calP^{V \setminus \calA}} \Big( \widehat{\delta_V}(I^A I^B S)^2 +\mu^2
\sum_{R\in\calP_*^A}\widehat{\delta_V}(R I^B S)^2 +\mu^4 \sum_{R\in\calP_*^A\otimes\calP_*^B}\widehat{\delta_V}(R S)^2
\Big)\\
& \leq ~(1-\mu^2)\sum_{S\in\calP^{V \setminus \calA}} \widehat{\delta_{V\setminus \calA}}(S)^2 +
(\mu^2-\mu^4)\sum_{S\in\calP^{V \setminus \{B\}}}\widehat{\delta_{V \setminus \{B\}}}(S)^2 + \mu^4 \sum_{S\in
\calP^{V}}\widehat{\delta_V}(S)^2
\big)\\
&\leq ~(1-\mu^2)2\cdot 2^{|V\setminus \calA|}\cdot\theta^{\dist(V \setminus \calA)} +
(\mu^2-\mu^4)2\cdot 2^{|V\setminus \{B\}|}\cdot\theta^{\dist(V\setminus \{B\})}
+ \mu^4~2\cdot 2^{|V|}\cdot\theta^{\dist(V)}\\
& \leq ~2\cdot 2^{|V''|}\cdot\theta^{\dist(V)}\Big(\frac{1+\mu^2}{2}+\mu^4\Big)\\
& \leq ~2\cdot 2^{|V''|}\cdot\theta^{\dist(V'')}\Big({\frac{1+\mu^2}{2}+\mu^4}\Big)\frac{1}{\theta}.
\end{align*}
Hence the invariant remains valid as long as $\frac{1+\mu^2}{2}+\mu^4\leq\theta<1$.
This can be satisfied as long as $\mu<1/\sqrt{2}$, equivalently $\eps_2>1-1/\sqrt{2}\approx 0.293$.

\subsubsection*{Acknowledgment}
We thank Mary Beth Ruskai for a pointer to~\cite{rsw:cpqubitmaps} and for sharing her insights
on the characterization of one-qubit operations.  We thank Peter Shor for
a discussion on entanglement breaking channels which is related to the discussion of~\cite{bdefms:cloning} at
the end of Section \ref{secintro}.


\begin{thebibliography}{10}

\bibitem{aharonov&benor:faulttol}
D.~Aharonov and M.~{Ben-Or}.
\newblock Fault tolerant quantum computation with constant error.
\newblock In {\em Proceedings of 29th ACM STOC}, pages 176--188, 1997.
\newblock quant-ph/9611025.

\bibitem{aliferis:thesis}
P.~Aliferis.
\newblock {\em Level Reduction and the Quantum Threshold Theorem}.
\newblock PhD thesis, Caltech, 2007.
\newblock quant-ph/0703264.

\bibitem{aliferis:fibonacci}
P.~Aliferis.
\newblock Threshold lower bounds for {K}nill's {F}ibonacci scheme.
\newblock quant-ph/0709.3603, 22 Sep 2007.

\bibitem{agp:threshold}
P.~Aliferis, D.~Gottesman, and J.~Preskill.
\newblock Accuracy threshold for postselected quantum computation.
\newblock {\em Quantum Information and Computation}, 8(3):181--244, 2008.
\newblock quant-ph/0703264.

\bibitem{barencoea:gates}
A.~Barenco, C.H. Bennett, R.~Cleve, D.P. DiVincenzo, N.~Margolus, P.~Shor,
  T.~Sleator, J.~Smolin, and H.~Weinfurter.
\newblock Elementary gates for quantum computation.
\newblock {\em Physical Review A}, 52:3457--3467, 1995.
\newblock quant-ph/9503016.

\bibitem{bernstein&vazirani:qcomplexity}
E.~Bernstein and U.~Vazirani.
\newblock Quantum complexity theory.
\newblock {\em SIAM Journal on Computing}, 26(5):1411--1473, 1997.
\newblock Earlier version in STOC'93.

\bibitem{BK:Magic}
S.~Bravyi and A.~Kitaev.
\newblock Universal quantum computation with ideal {C}lifford gates and noisy
  ancillas.
\newblock {\em Physical Review A}, 71(022316), 2005.
\newblock quant-ph/0403025.

\bibitem{bdefms:cloning}
D.~Bruss, D.~DiVincenzo, A.~Ekert, C.~Fuchs, C.~Macchiavello, and J.~Smolin.
\newblock Optimal universal and state-dependent quantum cloning.
\newblock {\em Physical Review~A}, 43:2368--2378, 1998.

\bibitem{bcllsu:faulttol}
H.~Buhrman, R.~Cleve, M.~Laurent, N.~Linden, A.~Schrijver, and F.~Unger.
\newblock New limits on fault-tolerant quantum computation.
\newblock In {\em Proceedings of 47th IEEE FOCS}, pages 411--419, 2006.
\newblock quant-ph/0604141.

\bibitem{EvansS99}
W.~S. Evans and L.~J. Schulman.
\newblock Signal propagation and noisy circuits.
\newblock {\em IEEE Trans. Inform. Theory}, 45(7):2367--2373, 1999.

\bibitem{EvansS03}
W.~S. Evans and L.~J. Schulman.
\newblock On the maximum tolerable noise of {$k$}-input gates for reliable
  computation by formulas.
\newblock {\em IEEE Trans. Inform. Theory}, 49(11):3094--3098, 2003.

\bibitem{gottesman:thesis}
D.~Gottesman.
\newblock {\em Stabilizer Codes and Quantum Error Correction}.
\newblock PhD thesis, Caltech, 1997.
\newblock quant-ph/9702052.

\bibitem{grover:search}
L.~K. Grover.
\newblock A fast quantum mechanical algorithm for database search.
\newblock In {\em Proceedings of 28th ACM STOC}, pages 212--219, 1996.
\newblock quant-ph/9605043.

\bibitem{kitaev:qcsurvey}
A.~Yu. Kitaev.
\newblock Quantum computations: Algorithms and error correction.
\newblock {\em Russian Mathematical Surveys}, 52(6):1191--1249, 1997.

\bibitem{KnillLZ98}
E.~Knill, R.~Laflamme, and W.~H. Zurek.
\newblock Resilient quantum computation.
\newblock {\em Science}, 279(5349):342--345, 1998.

\bibitem{knill:thresholdanalysis}
M.~Knill.
\newblock Fault-tolerant postselected quantum computation: Threshold analysis.
\newblock quant-ph/0404104, 19 Apr 2004.

\bibitem{knill:realisticallynoisy}
M.~Knill.
\newblock Quantum computing with realistically noisy devices.
\newblock {\em Nature}, 434:39--44, 2005.

\bibitem{klz:treshold}
M.~Knill, R.~Laflamme, and W.~Zurek.
\newblock Accuracy threshold for quantum computation.
\newblock quant-ph/9610011, 15 Oct 1996.

\bibitem{razborov:qnoise}
A.~Razborov.
\newblock An upper bound on the threshold quantum decoherence rate.
\newblock {\em Quantum Information and Computation}, 4(3):222--228, 2004.
\newblock quant-ph/0310136.

\bibitem{R:Magic}
B.~Reichardt.
\newblock Quantum universality from {M}agic {S}tates {D}istillation applied to
  {CSS} codes.
\newblock {\em Quantum Information Processing}, 4:251--264, 2005.

\bibitem{reichardt:thesis}
B.~Reichardt.
\newblock {\em Error-Detection-Based Quantum Fault Tolerance Against Discrete
  Pauli Noise}.
\newblock PhD thesis, UC Berkeley, 2006.
\newblock quant-ph/0612004.

\bibitem{reichardt:distilling}
B.~Reichardt.
\newblock Quantum universality by distilling certain one- and two-qubit states
  with stabilizer operations.
\newblock quant-ph/0608085, 2006.

\bibitem{rsw:cpqubitmaps}
M.~B. Ruskai, S.~Szarek, and E.~Werner.
\newblock An analysis of completely-positive trace-preserving maps on {${\cal
  M}_2$}.
\newblock {\em Linear Algebra and its Applications}, 347:159--187, 2002.
\newblock quant-ph/0101003.

\bibitem{shor:scheme}
P.~W. Shor.
\newblock Scheme for reducing decoherence in quantum memory.
\newblock {\em Physical Review A}, 52:2493, 1995.

\bibitem{shor:faulttol}
P.~W. Shor.
\newblock Fault-tolerant quantum computation.
\newblock In {\em Proceedings of 37th IEEE FOCS}, pages 56--65, 1996.

\bibitem{shor:factoring}
P.~W. Shor.
\newblock Polynomial-time algorithms for prime factorization and discrete
  logarithms on a quantum computer.
\newblock {\em SIAM Journal on Computing}, 26(5):1484--1509, 1997.
\newblock Earlier version in FOCS'94. quant-ph/9508027.

\bibitem{simon:power}
D.~Simon.
\newblock On the power of quantum computation.
\newblock {\em SIAM Journal on Computing}, 26(5):1474--1483, 1997.
\newblock Earlier version in FOCS'94.

\bibitem{steane:errcor}
A.~Steane.
\newblock Multiple particle interference and quantum error correction.
\newblock In {\em Proceedings of the Royal Society of London}, volume A452,
  pages 2551--2577, 1996.
\newblock quant-ph/9601029.

\bibitem{vhp:thresholds}
S.~Virmani, S.~Huelga, and M.~Plenio.
\newblock Classical simulability, entanglement breaking, and quantum
  computation thresholds.
\newblock {\em Physical Review A}, 71(042328), 2005.
\newblock quant-ph/0408076.

\end{thebibliography}

\end{document}